\documentclass[preprint]{elsarticle}

\usepackage{array}
\usepackage{float}

\usepackage{amsmath} % assumes amsmath package installed
\usepackage{amssymb}  % assumes amsmath package installedicx}    % Einbinden von EPS-Grafiken
\usepackage{theorem}

\def\S{S_{\md, \md'}}
\def\St{\tilde{S}_{\md, \md'}}
\def\md{\mathbf{d}}
\def\E{\mathcal{E}}
\def\e{\epsilon}
\def\d{\delta}
\def\dt{\tilde{d}}
\def\ab{($\e$,\,$\d$)}
\def\Prob{\mathbb{P}}
\newcommand{\A}{\mathcal{A}}

\newtheorem{theorem}{Theorem}
\newtheorem{corollary}{Corollary}
\newtheorem{definition}{Definition}
\newtheorem{proposition}{Proposition}
\newtheorem{example}{Example}
\DeclareMathOperator{\h}{h}

 % for a filled box

%\usepackage{environ}
%\NewEnviron{killcontents}{}
%\let\proof\killcontents
%\let\endproof\endkillcontents

\begin{document}

\title{Differentially Private Response Mechanisms on Categorical Data}
\author[naoise_doug]{Naoise Holohan}
\author[naoise_doug]{Douglas J. Leith}
\author[ollie]{Oliver Mason\corref{cor1}}

\address[naoise_doug]{School of Computer Science and Statistics, Trinity College Dublin, Ireland}
\address[ollie]{Dept. of Mathematics and Statistics/Hamilton Institute, Maynooth University-National University of Ireland
Maynooth, \\Co. Kildare, Ireland}
\cortext[cor1]{Corresponding author. Tel.: +353 (0)1 7083672; fax: +353
5(0)1 7083913; email: oliver.mason@nuim.ie}

\begin{abstract}
We study mechanisms for differential privacy on finite datasets.  By deriving \emph{sufficient sets} for differential privacy we obtain necessary and sufficient conditions for differential privacy, a tight lower bound on the maximal expected error of a discrete mechanism and a characterisation of the optimal mechanism which minimises the maximal expected error within the class of mechanisms considered.  
%We take McSherry and Talwar's exponential mechanism, formulated in our notation, and adapt it for use with {\ab}-differential privacy specifically on finite, discrete data. We then examine the sufficient sets of this general mechanism, and for a particular example using hamming distance, we make use of sufficient sets to establish necessary and sufficient conditions for differential privacy, producing a two-fold improvements in the level of privacy we can guarantee compared to previous literature. In the second main section of this paper, we present an alternative method to construct this particular form of the discrete exponential mechanism, with proof of equivalence, and add insight to the workings of the mechanism by using new techniques to prove previous theorems. Finally, we complete the paper with a proof of the optimal of this class of mechanisms.
\end{abstract}

\begin{keyword}
Data Privacy \sep Differential Privacy \sep Optimal Mechanisms. \MSC[2010]{68R01, 68R05, 60C05} 
\end{keyword}

\maketitle
%%%%%%%%%%%%%%%%%%%%%%%%%%%%%%%%%%%%%%%%%%%%%%%%%%%%%%%%%%%%%%%%%%%%%%%%
%%%%%%%%%%%%%%%%%%%%%%%%%%%%%%%%%%%%%%%%%%%%%%%%%%%%%%%%%%%%%%%%%%%%%%%%
\section{Introduction}

Data privacy has been of interest to researchers for decades \cite{DJL79}, but high-profile privacy breaches in recent years, such as those involving AOL \cite{BZ06} and Netflix \cite{NS08}, have renewed focus on the topic.  The movement towards smart metering systems for electricity, water and other utilities and the greater use of data mining in so-called smart cities and transport have given rise to further concerns over personal data privacy.  

The most traditional framework for the study of data privacy is that of tabular data.  A simple model of this type considers the data to be arranged as individual records within a table, where each record contains entries from some underlying dataset, which may be continuous or discrete depending on the type of data being studied.  Simple anonymisation techniques, such as removing names and social security numbers (so-called unique identifiers) from the data, have been shown to be inadequate \cite{Swe02}.   More sophisticated frameworks such as $k$-anonymity \cite{Swe02} and $\ell$-diversity \cite{MKG07} are also vulnerable to privacy attacks via the use of appropriate side-information or data from external sources \cite{MKG07, LLV07}.

Within the last decade, differential privacy \cite{Dwo06} has emerged as a popular framework for research in the field of data privacy based on its capability to provide a quantifiable basis for privacy preserving data publishing and mining.  This is a probabilistic approach to data privacy in which a suitably \emph{randomised} version of the correct response to a query is released.  The core idea is founded on the simple premise that the response to a user query should not be too tightly coupled with any one entry in the table.  One widely-adopted implementation of differential privacy for real-valued databases is to add an appropriate amount of noise sampled from a Laplace distribution to each cell of the database \cite{Dwo08}.

Much research on differential privacy to date has been completed on real-valued databases \cite{Dwo08}, although a considerable body of literature also exists on discrete data \cite{MT07, CMF11}; in particular some recent work has focussed on graph data relevant to applications in areas such as social networks \cite{KRS11, BBD13}.  

Differentially private mechanisms can be divided into two distinct classes: sanitisation based mechanisms; and output perturbation based mechanisms.  Our concern here is with the former class, which first constructs a sanitised version of the database and then answers queries on this.  It has been shown in \cite{HLM15} that if the sanitised database satisfies the requirements of differential privacy, then any query can be answered on it in a differentially private manner.

In writing this paper, we have two aims: the first is to present a set of new results on the mathematical foundations of differential privacy for discrete data; the second is to bring the problems in this field to the attention of researchers in discrete applied mathematics.  

We examine differentially private mechanisms for discrete data within the general probabilistic framework described in our previous paper \cite{HLM15}.  As we deal with finite datasets here, many of the measure-theoretic details required for the more general setting can be suppressed.  However, to properly set context, we include the more general definitions here in Section \ref{sc:prelim}.  

Our first results concern an adaptation for discrete data of the exponential mechanism introduced by McSherry and Talwar.  In particular, we consider the problem of \emph{sufficient sets} for differential privacy for this mechanism.  This problem is motivated by the practical issue of testing whether or not a mechanism is differentially private and arises from the following simple considerations. 

For a sanitisation to be differentially private, certain inequalities (described formally later) must hold on all subsets of the database space, which can necessitate checking a prohibitively large collection of sets in order to test for differential privacy.  The question of sufficient sets asks whether it is sufficient for the differential privacy condition to hold on a collection of these subsets for it to hold on all subsets. We can therefore reduce the workload required to check that a mechanism satisfies differential privacy.  In Section \ref{sc:dem}, we present results characterising sufficient sets for the discrete exponential mechanism.  We then use these to give necessary and sufficient conditions for differential privacy for this mechanism.  

A major concern of privacy research is the trade-off between privacy and accuracy.  For the current setting, in the absence of a given metric on the dataset, we measure the error of a sanitisation using hamming distance; in Theorem \ref{th:demerr} we derive a tight lower bound on the maximal expected error of a discrete exponential mechanism.  

In Section \ref{sc:san} we consider a seemingly unrelated approach to database sanitisation: product sanitisations.  We show that these are in fact equivalent to the discrete exponential mechanism constructed using the hamming distance and, building on results in \cite{HLM15}, we characterise differential privacy and the error for these in Theorems \ref{th:uni} and \ref{th8} respectively.  Finally in Theorem \ref{th:opt} we provide a characterisation of the optimal product sanitisation mechanism, which minimises the maximal expected error within the class of product sanitisations (and hence within the class of discrete exponential mechanisms).  Concluding remarks are given in Section~\ref{sc:conc}.

%%%%%%%%%%%%%%%%%%%%%%%%%%%%%%%%%%%%%%%%%%%%%%%%%%%%%%%%%%%%%%%%%%%%%%%%
\subsection{Related work}

Before the advent of differential privacy, Fienberg examined the use of data swapping and cell supression for privacy protection on categorical data \cite{FMS98}. Dwork then presented the notion of differential privacy in \cite{Dwo06}, and it limitations were discussed by Dankar in \cite{DE12}, including its applicability to categorical data.

Dwork's work was closely followed by McSherry and Talwar who proposed the exponential mechanism in \cite{MT07}. An instantiation of this was used by Hardt and Talwar \cite{HT10} in examining the geometry of differential privacy. Mohammed made use of the exponential mechanism for releasing count queries in \cite{MCF11,CMF11}, while a more recent contribution has looked at differential privacy on counts using a combination of the Laplace and exponential mechanisms \cite{ZM14}.

%%%%%%%%%%%%%%%%%%%%%%%%%%%%%%%%%%%%%%%%%%%%%%%%%%%%%%%%%%%%%%%%%%%%%%%%
%%%%%%%%%%%%%%%%%%%%%%%%%%%%%%%%%%%%%%%%%%%%%%%%%%%%%%%%%%%%%%%%%%%%%%%%
\section{Preliminaries}\label{sc:prelim}

%%%%%%%%%%%%%%%%%%%%%%%%%%%%%%%%%%%%%%%%%%%%%%%%%%%%%%%%%%%%%%%%%%%%%%%%
\subsection{Database model}
We consider a finite data set $D$ with $(m+1)$ elements ($m\ge1$). A database $\md$ with $n$ rows drawn from this data set is represented by a vector $\md=(d_1, \dots, d_n)\in D^n$. $D$ is equipped with a $\sigma$-algebra, in this case the power set $2^D$ and $D^n$ inherits the product $\sigma$-algebra, $2^{D^n}$. We are therefore considering all subsets of $D$ and $D^n$.

We will consider hamming distance on $D^n$. Recall that the hamming distance, $\h: D^n\times D^n\to \{0,1,\dots,n\}$, between two databases is the number of rows on which they differ:
\begin{equation}\label{eq:ham}
\h(\md, \md')=|\{i:d_i\ne d_i^\prime\}|.
\end{equation}

We say two databases $\md, \md'\in D^n$ are neighbours, written $\md\sim\md'$, if $\h(\md, \md')=1$, \emph{i.e.} they differ on exactly one row.

%%%%%%%%%%%%%%%%%%%%%%%%%%%%%%%%%%%%%%%%%%%%%%%%%%%%%%%%%%%%%%%%%%%%%%%%
\subsection{Query model}\label{sc:query}
We make use of the generalised query model introduced in \cite{HLM15}, adapted to the discrete setting. A query $Q:D^n\to E_Q$ outputs a response in $E_Q$, the structure of which is not specified (it may be numeric, categorical, functional, etc). $E_Q$ is, however, equipped with a $\sigma$-algebra $\A_Q$. We require that all queries be measurable, which is trivial in this setting since $Q^{-1}(A)\subseteq D^n$ for all $A\in\A_Q$.

%%%%%%%%%%%%%%%%%%%%%%%%%%%%%%%%%%%%%%%%%%%%%%%%%%%%%%%%%%%%%%%%%%%%%%%%
\subsection{Response mechanism}
Let $(\Omega, \mathcal{F}, \Prob)$ be a probability space. We define a response mechanism for a query $Q$ to be the family of measurable mappings
\begin{equation}
\{X_{Q,\md}:\Omega\to E_Q\mid\md\in D^n\}.
\end{equation}
For simplicity, when the query in question is the identity query $I$, we denote $X_{I, \md}$ by $X_\md:\Omega\to D^n$. Where there is no ambiguity, the response mechanism will be written as $\{X_{Q, \md}\}$ (or $\{X_\md\}$ when dealing with the identity query).

In this paper we deal exclusively with \emph{sanitised response mechanisms}, where $X_{Q, \md} = Q\circ X_\md$. In this case the mechanism is generated by first sanitising the database $\md$ and then answering queries on the sanitised database $X_\md$ without any further modification to the data. Hence, a sanitised response mechanism for a query $Q$ is the family of mappings
\begin{equation}
\{X_{Q,\md}=Q\circ X_\md:\Omega\to E_Q\mid\md\in D^n\}.
\end{equation}

One mechanism which we will make use of in this paper is the exponential mechanism, as described by McSherry and Talwar \cite{MT07}. The following is the exponential mechanism written in our notation, reformulated to deal specifically with discrete data.

\begin{definition}[Exponential mechanism]\label{df:mt07}
Given a query $Q$, a query output space $E_Q$, a utility function (which measures the utility of all possible query answers to the database being queried) $u:D^n\times E_Q\to\mathbb{R}$, a measure $\mu:E_Q\to\mathbb{R}$ and a normalisation constant $C_\md$, the exponential mechanism is defined to be the family of mappings $\{X_{Q,\md}:\Omega\to E_Q\mid\md\in D^n\}$, with probability density function with respect to $\mu$ given by $C_\md^{-1} e^{\epsilon u(\md,q)}$ for each $\md\in D^n$ and for $q\in E_Q$.
\end{definition}

\textbf{Comment:} If $E_Q$ is discrete, we can specify $\mu$ by $\{\mu(q):q\in E_Q\}$ and the exponential mechanism has probability mass function
\begin{equation}\label{eq:expmech}
\Prob(X_{Q,\md}=q)=C_\md^{-1} e^{\epsilon u(\md,q)}\mu(q),
\end{equation}
where $q\in E_Q$.

%%%%%%%%%%%%%%%%%%%%%%%%%%%%%%%%%%%%%%%%%%%%%%%%%%%%%%%%%%%%%%%%%%%%%%%%
\subsection{Differential privacy}\label{sc:dpintro}

We now define what it means for a response mechanism in our framework to be differentially private.

\begin{definition}[Differential privacy with respect to a query]
Let a query $Q:D^n\to E_Q$ and parameters $\epsilon\ge0$ and $0\le\delta\le1$ be given. A response mechanism $\{X_{Q,\md}:\Omega\to E_Q\mid\md\in D^n\}$ is {\ab}-differentially private if
\begin{equation}\label{eq:dp}
\Prob(X_{Q,\md}\in A)\le e^\epsilon\Prob(X_{Q,\md'}\in A)+\delta,
\end{equation}
for all $\md\sim\md'\in D^n$ and for all measurable $A\subseteq E_Q$.

Note that the relation $\md\sim\md'$ is symmetric, so (\ref{eq:dp}) must hold when $\md$ and $\md'$ are swapped.
\end{definition}

\textbf{Note:} McSherry and Talwar showed that the exponential mechanism (Definition~\ref{df:mt07}) satisfies $2\epsilon\Delta u$-differential privacy ($\delta=0$), where $$\Delta u=\max_{\md\sim\md'\in D^n,q\in E_Q}|u(\md,q)-u(\md',q)|.$$

Theorem~4 of \cite{HLM15} simplifies the problem of checking differential privacy for sanitised response mechanisms. We now recall that theorem. 

\begin{theorem}[Identity query]
A sanitised response mechanism which is {\ab}-differentially private with respect to the identity query is {\ab}-differentially private with respect to any query $Q$.
\end{theorem}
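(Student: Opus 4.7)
The plan is to reduce the differential privacy condition on $E_Q$ to a differential privacy condition on $D^n$ via the preimage under $Q$, and then invoke the hypothesis that $\{X_\md\}$ is (\,$\e$,\,$\d$\,)-differentially private with respect to the identity query. The key structural fact making this possible is that the mechanism is \emph{sanitised}, so $X_{Q,\md} = Q \circ X_\md$ factors through $X_\md$ for every $\md$.

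Concretely, I would fix an arbitrary query $Q : D^n \to E_Q$, a pair of neighbouring databases $\md \sim \md' \in D^n$, and a measurable set $A \in \A_Q$. Since $Q$ is measurable (automatic here, as every subset of $D^n$ lies in $2^{D^n}$), the preimage $B := Q^{-1}(A) \subseteq D^n$ is a measurable subset of the database space. Using the sanitised mechanism decomposition,
$$\Prob(X_{Q,\md} \in A) = \Prob(Q(X_\md) \in A) = \Prob(X_\md \in B),$$
and likewise $\Prob(X_{Q,\md'} \in A) = \Prob(X_{\md'} \in B)$.

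Applying the hypothesis of (\,$\e$,\,$\d$\,)-differential privacy of $\{X_\md\}$ with respect to the identity query to the measurable set $B$ and the neighbouring pair $\md \sim \md'$ yields
$$\Prob(X_\md \in B) \le e^{\epsilon}\,\Prob(X_{\md'} \in B) + \delta.$$
Substituting the two equalities above gives exactly
$$\Prob(X_{Q,\md} \in A) \le e^{\epsilon}\,\Prob(X_{Q,\md'} \in A) + \delta.$$
Since $Q$, $A$ and $\md \sim \md'$ were arbitrary, the mechanism $\{X_{Q,\md}\}$ is (\,$\e$,\,$\d$\,)-differentially private with respect to $Q$. There is no real obstacle here: the only thing to check carefully is measurability of the preimage, and in the finite discrete setting considered in this section this is free. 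The argument is entirely a matter of unwinding definitions and exploiting the factorisation through the identity sanitisation.
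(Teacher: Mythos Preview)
Your argument is correct and is precisely the natural proof: pull back $A$ along $Q$, use $X_{Q,\md}=Q\circ X_\md$ to rewrite the probabilities, and apply the identity-query hypothesis to $B=Q^{-1}(A)$. Note that the present paper does not actually prove this theorem; it merely recalls it as Theorem~4 of \cite{HLM15}, so there is no in-paper proof to compare against --- but your approach is exactly the standard one and matches what that reference does.
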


We therefore need only examine the response mechanism for the identity query,
\begin{equation}
\{X_\md:\Omega\to D^n\mid\md\in D^n\}.
\end{equation}

\begin{example}[Categorical data I]\label{eg:prelim}
Suppose the data we are interested in records individuals' favourite hobby. The data set $D$ would contain a list of all hobbies. For simplicity in this example, we restrict answers to the following five hobbies: Sports; Cars; Television; Computer games; and Reading. Hence, $m=4$. Each database $\md$ would contain the favourite hobby of $n$ individuals. If $n=6$, one possible $\md$ could be represented by the following list: Sports; Computer games; Television; Sports; Reading; Television.

Queries on such databases could include counting the number of unique hobbies (4 in the case above) or how many list `Television' as their favourite hobby (2 in the case above). The identity query would be another valid query.
\end{example}

%%%%%%%%%%%%%%%%%%%%%%%%%%%%%%%%%%%%%%%%%%%%%%%%%%%%%%%%%%%%%%%%%%%%%%%%
%%%%%%%%%%%%%%%%%%%%%%%%%%%%%%%%%%%%%%%%%%%%%%%%%%%%%%%%%%%%%%%%%%%%%%%%
\section{Sufficient sets for discrete exponential mechanism}\label{sc:dem}

In order for a response mechanism to be deemed {\ab}-differentially private, (\ref{eq:dp}) must hold for all pairs of neighbouring databases and for all possible subsets of $D^n$. If we were to check all combinations, this would require checking all $nm(m+1)^n$ pairs of neighbouring databases on $2^{(m+1)^n}-2$ subsets of $D^n$ (all subsets except $D^n$ itself and $\emptyset$). Therefore, checking a discrete response mechanism for differential privacy requires $nm(m+1)^n(2^{(m+1)^n}-2)$ checks in total.

However, for a given pair of neighbouring databases, it is not always necessary to check (\ref{eq:dp}) on all subsets of $D^n$. So we ask the question: For a given $\md\sim\md'$, what is the smallest collection of subsets of $D^n$ that we need to check for (\ref{eq:dp}) to hold on all subsets of $D^n$? We call such a collection of subsets \emph{sufficient sets} of the mechanism for $\md\sim\md'$.

In this section, we examine the sufficient sets for a class of discrete response mechanisms and show that, even in the most general cases, significant improvements on workload can be made when checking for differential privacy. We also present conditions that are necessary and sufficient for differential privacy to hold. This compares to the mostly sufficient conditions presented in other differential privacy literature, which can therefore give a conservative estimate on the privacy level achieved.

%%%%%%%%%%%%%%%%%%%%%%%%%%%%%%%%%%%%%%%%%%%%%%%%%%%%%%%%%%%%%%%%%%%%%%%%
\subsection{General response mechanism}
We begin by considering the exponential mechanism described by McSherry and Talwar \cite{MT07}, as detailed in Definition~\ref{df:mt07}. We wish to assign a probability to each database based on its utility to the input/reference database. This is determined by the utility function, which can be a metric or any other function deemed suitable for a particular application. As discussed in Section~\ref{sc:dpintro}, we are only concerning ourselves with the identity query.

Ordinarily, to minimise error, we would want to assign the reference/input database itself the highest probability of being returned, with decreasing likelihood the further we move away from the reference, as determined by the utility function.

\begin{definition}[Discrete exponential mechanism]\label{df:dexp}
Let $u:D^n\times D^n\to\mathbb{R}$ be given. The \emph{discrete exponential response mechanism} is defined to be a family of measurable mappings
$$\{X_\md:\Omega\to D^n\mid\md\in D^n\},$$
where each $X_\md$ satisfies
\begin{equation}
\Prob(X_\md=\md')=C_\md^{-1}e^{u(\md, \md')},
\end{equation}
for all $\md, \md'\in D^n$. As $D^n$ is finite, we can define the normalisation constant $C_\md$ as $$C_\md=\sum_{\md'\in D^n}e^{u(\md,\md')}$$ for each $\md\in D^n$.
\end{definition}

\textbf{Remark:} While similar to the exponential mechanism (\ref{eq:expmech}), the discrete exponential mechanism differs by dealing only with the identity query ($Q=I$), by having a uniform measure ($\mu=1$) and by absorbing $\epsilon$ into $u$, in order to allow consideration of {\ab}-differential privacy.

Even with this general set-up, we can still make improvements on workload when checking for differential privacy. We begin by defining the following set for each pair of neighbouring databases $\md\sim\md'\in D^n$:
\begin{align}
\S&=\left\{\md^*\in D^n \mid \Prob(X_\md=\md^*)>\Prob(X_{\md'}=\md^*)\right\}\\
&=\left\{\md^*\in D^n \mid C_\md^{-1}e^{u(\md, \md^*)}>C_{\md'}^{-1}e^{u(\md', \md^*)}\right\}\nonumber
\end{align}
This set is a collection of the ``worst-case'' databases for $\md$ and $\md'$, and, as we show in Theorem~\ref{th:dexpsuff1}, is the only set of interest when checking for differential privacy.

\begin{theorem}[Sufficient sets]\label{th:dexpsuff1}
Let $\{X_\md\}$ be a discrete exponential mechanism and fix $\md\sim\md'\in D^n$. If (\ref{eq:dp}) holds on all $A\subseteq\S$ then it will hold on all $A\subseteq D^n$.
\end{theorem}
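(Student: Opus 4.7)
The plan is to exploit the defining property of $\S$: outside this set the required differential-privacy inequality is already satisfied pointwise, so the only ``dangerous'' contribution to $\Prob(X_\md\in A)-e^\epsilon\Prob(X_{\md'}\in A)$ comes from the portion of $A$ that lies inside $\S$. This reduces the task of verifying (\ref{eq:dp}) for an arbitrary $A$ to verifying it for the subset $A\cap\S$, which is covered by hypothesis.

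Concretely, I would fix an arbitrary $A\subseteq D^n$ and decompose it as $A=B\cup C$, where $B=A\cap\S$ and $C=A\setminus\S$. For any $\md^*\in C$, the complementary inequality $\Prob(X_\md=\md^*)\le\Prob(X_{\md'}=\md^*)$ holds by definition of $\S$. Summing over $C$ and using $e^\epsilon\ge 1$ (since $\epsilon\ge 0$), this immediately gives
\begin{equation*}
\Prob(X_\md\in C)\le\Prob(X_{\md'}\in C)\le e^\epsilon\Prob(X_{\md'}\in C),
\end{equation*}
so the $C$-contribution to the signed quantity $\Prob(X_\md\in\cdot)-e^\epsilon\Prob(X_{\md'}\in\cdot)$ is non-positive.

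Next I would add and subtract to write
\begin{equation*}
\Prob(X_\md\in A)-e^\epsilon\Prob(X_{\md'}\in A) = \bigl[\Prob(X_\md\in B)-e^\epsilon\Prob(X_{\md'}\in B)\bigr] + \bigl[\Prob(X_\md\in C)-e^\epsilon\Prob(X_{\md'}\in C)\bigr],
\end{equation*}
where the disjointness of $B$ and $C$ allows finite additivity to split the probabilities. By the previous step the second bracket is $\le 0$, and since $B\subseteq\S$ the hypothesis gives that the first bracket is $\le\delta$. Combining the two yields $\Prob(X_\md\in A)\le e^\epsilon\Prob(X_{\md'}\in A)+\delta$, as required.

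No step here is genuinely difficult; the only subtlety is recognising that the role of $\S$ is precisely to isolate the pointwise violations of $\Prob(X_\md=\md^*)\le\Prob(X_{\md'}=\md^*)$, and that all other points automatically satisfy the much stronger pointwise inequality needed for (\ref{eq:dp}). The argument relies only on finite additivity of $\Prob$ on the discrete space $D^n$ and on $e^\epsilon\ge 1$, both of which hold in the stated setting.
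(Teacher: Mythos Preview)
Your proof is correct and follows essentially the same approach as the paper: decompose $A$ into $A\cap\S$ and $A\setminus\S$, apply the hypothesis to the first part, and use the pointwise inequality $\Prob(X_\md=\md^*)\le\Prob(X_{\md'}=\md^*)$ (together with $e^\epsilon\ge 1$) on the second. The only cosmetic difference is that the paper adds the elements of $A\setminus\S$ one at a time in an inductive argument, whereas you sum over all of $C$ at once; your version is slightly more direct but the underlying idea is identical.
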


	\begin{proof}
	We fix $\md\sim\md'\in D^n$, let $A\subseteq D^n$ be given and assume (\ref{eq:dp}) holds on all subsets of $\S$. By assmption, (\ref{eq:dp}) holds on $A_0 = A\cap\S$, hence $\Prob(X_\md\in A_0)\le e^\epsilon \Prob(X_{\md'}\in A_0)+\delta$. If $A_0=A$ we are done, so assume $A\setminus A_0\ne\emptyset$.

	For each $\md^*\in A\setminus A_0$, $\Prob(X_\md=\md^*)\le\Prob(X_{\md'}=\md^*)$. Pick one such $\md^*_0\in A\setminus A_0$, then
	\begin{align*}\Prob(X_\md\in A_0\cup \{\md^*_0\}) &= \Prob(X_\md \in A_0) + \Prob(X_\md=\md^*_0)\\
	&\le e^\epsilon \Prob(X_{\md'}\in A_0) + \delta + \Prob(X_\md=\md^*_0)\\
	&\le e^\epsilon \Prob(X_{\md'}\in A_0) + \delta + \Prob(X_{\md'}=\md^*_0)\\
	&\le e^\epsilon \Prob(X_{\md'}\in A_0\cup \{\md^*_0\}) + \delta.
	\end{align*}

	Hence, (\ref{eq:dp}) holds on $A_1=A_0\cup \{\md^*_0\}$. We can similarly show that (\ref{eq:dp}) holds on $A_2=A_1\cup \{\md^*_1\}$ for any $\md^*_1\in A\setminus A_1$. By repeating this process (picking $\md^*_i\in A\setminus A_i$), we can show that (\ref{eq:dp}) holds on $A_{i+1}=A_i\cup\{\md^*_i\}$ for each $i$.
	
	Since $A$ is finite ($D^n$ is finite), this process will eventually terminate when $A_i=A$, \emph{i.e.} $i=|A\setminus A_0|$. Hence, (\ref{eq:dp}) will hold on $A$ as required.
	\end{proof}

We now look at a simple example to demonstrate the impact of Theorem~\ref{th:dexpsuff1}.

	\begin{example}[$\ell^1$ norm]\label{eg:th2}
	For this example, we consider a discrete exponential mechanism where $D=\{0,1,2\}, n=2$ and $u(\md, \md')=-\|\md-\md'\|_1$. In this case, $|D^n|=9$, and we are therefore required to check $2^9-2=510$ subsets (all subsets of $D^n$ except $\emptyset$ and $D^n$ itself) for every pair of neighbouring databases $\md\sim\md'\in D^n$, meaning a total of $510\times36=18,360$ checks.

	Let $\md=\binom{0}{1}$ and $\md'=\binom{2}{1}$, then $\S=\left\{\binom{0}{0},\binom{0}{1},\binom{0}{2}\right\}$. Hence, for this particular pair of neighbouring databases, it is sufficient to check that (\ref{eq:dp}) holds on just $2^3-1=7$ subsets (all subsets of $\S$ except $\emptyset$).

	If we choose $\md=\binom{1}{1}$ and $\md'=\binom{2}{1}$, then we get $\S=\left\{\binom{0}{0}, \binom{0}{1}, \binom{0}{2}, \binom{1}{0}, \binom{1}{1}, \binom{1}{2}\right\}$. This gives a total of $2^6-1=63$ subsets to check.

	By populating the entire set of databases, we can show that 21 pairs of neighbour databases require 7 subset checks, while the remaining 15 pairs require 63 checks. That leaves us with a total of 1,092 subset checks to verify differential privacy, compared with 18,360 without the use of Theorem~\ref{th:dexpsuff1}.
	\end{example}

%%%%%%%%%%%%%%%%%%%%%%%%%%%%%%%%%%%%%%%%%%%%%%%%%%%%%%%%%%%%%%%%%%%%%%%%
\subsection{Response mechanism with fixed $C_\md$}

By Theorem~\ref{th:dexpsuff1}, we know that, for a discrete exponential mechanism, checking that (\ref{eq:dp}) holds on all subsets of $\S$ is equivalent to checking all subsets of $D^n$. However, if $C_\md$ is fixed for all $\md\in D^n$, we can partition $\S$ to reduce our workload further. Let us first consider the following set relating to $\S$:
\begin{equation}
\alpha =\{u(\md,\md^*)-u(\md',\md^*)\mid\md^*\in\S\}.
\end{equation}
There are only finitely many such values in $\alpha$, since $\S$ is finite. We label these elements $\alpha_1, \alpha_2, \dots, \alpha_s$, with $0<\alpha_1<\alpha_2<\cdots<\alpha_s$.

We then partition $\S$ into the collection of subsets $\{\St^1,\St^2,\dots,\St^s\}$ as follows:
\begin{equation}
\St^i=\left\{\md^*\in\S\mid u\left(\md,\md^*\right)-u\left(\md',\md^*\right)=\alpha_i\right\}.
\end{equation}

Note that for each $i$ and for all $\md^*\in\St^i$,
\begin{align}\label{eq:delf}
\Prob(X_{\md'}=\md^*)&=Ce^{u(\md', \md^*)}\nonumber\\
&= Ce^{u(\md, \md^*)-\alpha_i}\nonumber\\
&=e^{-\alpha_i}\Prob(X_\md=\md^*).
\end{align}

We can now show that if (\ref{eq:dp}) holds on these partitions, it will hold on all subsets of each partition, i.e. the partitions are sufficient sets of themselves.

\begin{theorem}\label{th:gensuffsets}
Fix $\md\sim\md'\in D^n$ and let $\{X_\md\}$ be a discrete exponential mechanism with $C_\md=C$ for all $\md\in D^n$. If (\ref{eq:dp}) holds on $\St^i$ then it will hold on all $A\subseteq\St^i$ for each $i$.
\end{theorem}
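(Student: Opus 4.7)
The plan is to exploit the defining property of the partition blocks $\St^i$: equation~(\ref{eq:delf}) gives that on $\St^i$ the probability ratio is a single constant, so for every $A\subseteq\St^i$ we have the ``pointwise'' identity $\Prob(X_\md\in A)=e^{\alpha_i}\Prob(X_{\md'}\in A)$. The whole game then reduces to turning the inequality hypothesis on $\St^i$ into an inequality on $A$, using only monotonicity of measure.

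First I would unpack the hypothesis that (\ref{eq:dp}) holds on $\St^i$ together with (\ref{eq:delf}), to obtain $(e^{\alpha_i}-e^{\e})\Prob(X_{\md'}\in\St^i)\le\d$. I would then split into two cases on the sign of $e^{\alpha_i}-e^{\e}$. If $\alpha_i\le\e$, then $e^{\alpha_i}\le e^{\e}$ and the chain
\begin{equation*}
\Prob(X_\md\in A)=e^{\alpha_i}\Prob(X_{\md'}\in A)\le e^{\e}\Prob(X_{\md'}\in A)\le e^{\e}\Prob(X_{\md'}\in A)+\d
\end{equation*}
does the job immediately, without ever needing the hypothesis.

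If $\alpha_i>\e$, I would use $A\subseteq\St^i$ and monotonicity of $\Prob(X_{\md'}\in\cdot)$ to get $\Prob(X_{\md'}\in A)\le\Prob(X_{\md'}\in\St^i)$, whence
\begin{equation*}
(e^{\alpha_i}-e^{\e})\Prob(X_{\md'}\in A)\le(e^{\alpha_i}-e^{\e})\Prob(X_{\md'}\in\St^i)\le\d,
\end{equation*}
and rearranging with the identity $\Prob(X_\md\in A)=e^{\alpha_i}\Prob(X_{\md'}\in A)$ recovers (\ref{eq:dp}) on $A$.

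The only subtle step is the second case, where we genuinely need the hypothesis on $\St^i$; the observation that $(e^{\alpha_i}-e^{\e})$ is nonnegative there is what lets us pass from the ``large'' set $\St^i$ down to the arbitrary subset $A$ by a simple monotonicity estimate. The assumption $C_\md=C$ is used only at the start, to ensure that the ratio $\Prob(X_\md=\md^*)/\Prob(X_{\md'}=\md^*)$ depends only on $u(\md,\md^*)-u(\md',\md^*)$, which is precisely what makes the partition by $\alpha_i$ meaningful; no further work is required.
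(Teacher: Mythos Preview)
Your proof is correct and follows essentially the same approach as the paper's: both exploit the constant-ratio identity (\ref{eq:delf}) on $\St^i$ together with monotonicity of measure. The only cosmetic difference is that the paper divides the hypothesis through by $\Prob(X_\md\in\St^i)$ to obtain $1\le e^{\e-\alpha_i}+\d/\Prob(X_\md\in\St^i)$ and then replaces $\St^i$ by $A$ in the denominator (using $\Prob(X_\md\in A)\le\Prob(X_\md\in\St^i)$), which handles both of your cases uniformly without an explicit sign split on $e^{\alpha_i}-e^{\e}$.
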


	\begin{proof}
	Fix $\md\sim\md'\in D^n$. We assume
	$$\Prob(X_\md\in\St^i)\le e^\epsilon\Prob(X_{\md'}\in\St^i)+\delta,$$
	and let $A\subseteq\St^i$. By (\ref{eq:delf}), $\Prob(X_{\md'}\in A)=e^{-\alpha_i}\Prob(X_\md\in A)$. Since this also holds for the set $\St^i$ itself, we have
	$$1\le e^{\epsilon-\alpha_i}+\frac{\delta}{\Prob(X_\md\in\St^i)}.$$

	Clearly $\Prob(X_\md\in A)\le \Prob(X_\md\in\St^i)$, which gives
	$$1\le e^{\epsilon-\alpha_i}+\frac{\delta}{\Prob(X_\md\in A)},$$
	or, rewriting,
	$$\Prob(X_\md\in A)\le e^\epsilon\Prob(X_{\md'}\in A)+\delta.$$

	Hence, (\ref{eq:dp}) holds on all $A\subseteq\St^i$.
	\end{proof}

When $\delta=0$, Theorem~\ref{th:gensuffsets} tells us that these partitions $\St^i$ of $\S$ are all we need to check to verify $\epsilon$-differential privacy, i.e. the sufficient sets are the collection of subsets $\{\St^1, \dots, \St^s\}$.

\begin{corollary}[Sufficient sets with fixed $C_\md$]\label{cr:nonrelaxeddp}
Let $\{X_\md\}$ be a discrete exponential mechanism, $\delta=0$ and fix $\md\sim\md'\in D^n$. If (\ref{eq:dp}) holds on $\St^i$ for all $i$, then it will hold on all subsets of $D^n$.
\end{corollary}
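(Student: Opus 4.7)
The plan is to deduce the corollary by combining Theorem~\ref{th:dexpsuff1} (which reduces the verification of (\ref{eq:dp}) to subsets of $\S$) with Theorem~\ref{th:gensuffsets} (which upgrades a check on each partition $\St^i$ to a check on all subsets of that $\St^i$), bridged by a disjoint-union argument that is clean precisely because $\delta=0$.

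First I would fix $\md\sim\md'$ and assume (\ref{eq:dp}) holds on each $\St^i$. By Theorem~\ref{th:gensuffsets}, this immediately upgrades to the statement that (\ref{eq:dp}) holds on every $B\subseteq\St^i$, for each $i=1,\dots,s$. Next, given an arbitrary $A\subseteq\S$, I would use that $\{\St^1,\dots,\St^s\}$ partitions $\S$ to write $A$ as the disjoint union $A=\bigsqcup_{i=1}^s (A\cap\St^i)$. Applying the upgraded inequality to each piece $A\cap\St^i\subseteq\St^i$ and summing gives
\begin{align*}
\Prob(X_\md\in A)&=\sum_{i=1}^s \Prob(X_\md\in A\cap\St^i)\\
&\le \sum_{i=1}^s e^{\epsilon}\Prob(X_{\md'}\in A\cap\St^i)\\
&= e^{\epsilon}\Prob(X_{\md'}\in A),
\end{align*}
so that (\ref{eq:dp}) (with $\delta=0$) holds on every $A\subseteq\S$. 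A single appeal to Theorem~\ref{th:dexpsuff1} then extends this to arbitrary $A\subseteq D^n$, completing the proof.

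The only delicate point, and the reason the hypothesis $\delta=0$ is used, is the summation step: with a nonzero $\delta$ the additive slack would accumulate to $s\delta$ across the $s$ partition blocks rather than remaining $\delta$, and the inequality would no longer close. Since $\delta=0$ removes this additive term entirely, the sum over the disjoint pieces is harmless. Everything else — the disjointness of the $A\cap\St^i$, the finiteness of $\S$, and the reduction to subsets of $\S$ — is already supplied by the earlier results, so no additional technical obstacles arise.
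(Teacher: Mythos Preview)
Your proposal is correct and follows essentially the same approach as the paper: both use Theorem~\ref{th:gensuffsets} to extend the hypothesis from each $\St^i$ to all its subsets, then exploit the disjoint partition $\{\St^i\}$ of $\S$ to sum the resulting inequalities (with $\delta=0$ ensuring no additive slack accumulates), and finally invoke Theorem~\ref{th:dexpsuff1} to pass from subsets of $\S$ to all subsets of $D^n$. Your additional remark on why $\delta=0$ is essential is a helpful clarification not made explicit in the paper's proof.
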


	\begin{proof}
	Let $A\subseteq\S$ and assume that (\ref{eq:dp}) holds on $\St^i$ for all $i$. Hence, by Theorem~\ref{th:gensuffsets}, (\ref{eq:dp}) holds on all subsets of $\St^i$ for all $i$. As $\{\St^i\}$ partitions $\S$, $A=\bigcup_{i}A\cap\St^i$, hence $\Prob(X_\md\in A\cap\St^i)\le e^\epsilon\Prob(X_{\md'}\in A\cap\St^i)$ for all $i$.
	
	Since $\St^i\cap\St^j=\emptyset$ when $i\ne j$, $\Prob(X_\md\in\bigcup_{i}A\cap\St^i)=\sum_i \Prob(X_\md\in A\cap\St^i)$, and so,
	\begin{align*}
	\Prob(X_\md\in A)&=\Prob\left(X_\md\in\bigcup_{i}A\cap\St^i\right)\\
	&=\sum_i \Prob\left(X_\md\in A\cap\St^i\right)\\
	&\le e^\epsilon\sum_i \Prob\left(X_{\md'}\in A\cap\St^i\right)\\
	&= e^\epsilon\Prob(X_{\md'}\in A).
	\end{align*}
	Therefore (\ref{eq:dp}) holds on all subsets of $\S$, and by Theorem~\ref{th:dexpsuff1}, it holds on all subsets of $D^n$.
	\end{proof}

%%%%%%%%%%%%%%%%%%%%%%%%%%%%%%%%%%%%%%%%%%%%%%%%%%%%%%%%%%%%%%%%%%%%%%%%
\subsection{Discrete exponential mechanism with hamming distance}

The usefulness of Theorem~\ref{th:gensuffsets} becomes particularly apparent when we restrict our response mechanism to one derived from hamming distance. For the remainder of this section our utility function $u$ is defined to be
\begin{equation}\label{eq:demham}
u(\md, \md') = -k\h(\md,\md'),
\end{equation}
where $k\ge0$ is a privacy parameter and $\h(\md,\md')$ is the hamming distance between $\md$ and $\md'$ (the number of elements on which they differ, see (\ref{eq:ham})). Note that for this set-up, the normalisation constant $C_\md=C$ is fixed for all $\md\in D^n$.

\begin{proposition}
For a discrete exponential mechanism satisfying (\ref{eq:demham}),
\begin{equation}
C_\md=C=\left(1+\frac{m}{e^k}\right)^{-n}.
\end{equation}
\end{proposition}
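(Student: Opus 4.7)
The plan is to evaluate $\sum_{\md' \in D^n} e^{u(\md,\md')}$ directly using the product structure of $D^n$ and the additivity of the hamming distance, and then to read off $C$.

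First, since $\h(\md,\md')$ counts the rows on which $\md$ and $\md'$ disagree, the exponential decomposes coordinate-by-coordinate as
\[
e^{u(\md,\md')} \;=\; e^{-k\h(\md,\md')} \;=\; \prod_{i=1}^n f(d_i, d_i'),
\]
where $f(d_i, d_i') = 1$ if $d_i = d_i'$ and $f(d_i, d_i') = 1/e^k$ otherwise. The Cartesian structure $D^n = D \times \cdots \times D$ then allows the sum over $\md'$ to be interchanged with the product over rows:
\[
\sum_{\md' \in D^n} e^{u(\md,\md')} \;=\; \prod_{i=1}^n \sum_{d_i' \in D} f(d_i, d_i').
\]

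Next, for each row exactly one choice $d_i' \in D$ matches $d_i$ and contributes $1$, while the remaining $|D|-1 = m$ choices each contribute $1/e^k$, so every inner sum evaluates to $1 + m/e^k$. The product over $n$ rows is therefore $(1 + m/e^k)^n$. Since this depends only on $n$, $m$ and $k$ and not on $\md$, the value is common to every database, which justifies dropping the subscript and writing $C_\md = C$.

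Finally, $C$ is pinned down by the requirement that the probability mass function $\Prob(X_\md = \md')$ normalise over $\md' \in D^n$, which forces $C$ to be the reciprocal of the partition function just computed; substituting yields $C = (1 + m/e^k)^{-n}$ as claimed. The calculation is essentially bookkeeping, and the only conceptual step is spotting that the row-wise additivity of $\h$ makes the sum factorise across coordinates, after which the rest reduces to counting matching versus non-matching entries in each row.
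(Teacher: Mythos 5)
Your proof is correct, but it takes a different route from the paper's. The paper stratifies $D^n$ by hamming distance from $\md$: it counts $\binom{n}{i}m^i$ databases at distance $i$, writes the partition function as $\sum_{i=0}^n \binom{n}{i} m^i e^{-ik}$, and closes the sum with the binomial theorem. You instead exploit the Cartesian structure of $D^n$ directly, factorising $e^{-k\h(\md,\md')}$ row by row and interchanging the sum with the product, so each coordinate contributes a factor $1+m/e^k$ and no binomial identity is needed. The two computations are of course two sides of the same coin --- the binomial expansion of $\left(1+\frac{m}{e^k}\right)^n$ is exactly the paper's stratified count --- but your factorisation is slightly more elementary and has the added virtue of making visible \emph{why} the constant is an $n$-th power: the mechanism is secretly a product measure over rows. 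This anticipates the equivalence with the product sanitisation mechanism established later in Theorem~\ref{th:equiv}, whereas the paper's counting argument keeps that observation hidden until Section~\ref{sc:san}. Both arguments correctly conclude that the partition function is independent of $\md$, which justifies writing $C_\md = C$.
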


	\begin{proof}
	Let $\md\in D^n$ be given. Then,
	\begin{align*}
	\sum_{\md^\prime\in D^n}\Prob(X_\md=\md^\prime) &=\sum_{\md^\prime\in D^n} C_\md e^{-k\h(\md, \md^\prime)}\\
	&=C_\md e^0+C_\md n m e^{-k}+C_\md \binom{n}{2}m^2e^{-2k}\\
	&\qquad\qquad+\cdots+C_\md m^ne^{-nk}\\
	&=C_\md\sum_{i=0}^n\binom{n}{i}m^ie^{-ik}\\
	&=C_\md\left(1+\frac{m}{e^k}\right)^n.
	\end{align*}

	For the probability mass function of $X_\md$ to sum to 1, we need
	$$C_\md\left(1+\frac{m}{e^k}\right)^n=1.$$
	Rearranging this completes the proof.
	\end{proof}

In this case, the sufficient sets for the problem condense down to a single set.

\begin{corollary}[Sufficient sets for hamming distance]\label{cr:dexpsuff}
Consider a discrete exponential mechanism satisfying (\ref{eq:demham}). If (\ref{eq:dp}) holds on $\S$, then it will hold on all $A\subseteq D^n$.
\end{corollary}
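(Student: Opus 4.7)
The plan is to exploit the two preceding results by showing that, for the Hamming-distance utility, the partition of $\S$ introduced before Theorem~\ref{th:gensuffsets} collapses to a single block, namely $\S$ itself. The preceding proposition already guarantees the hypothesis $C_\md = C$ of Theorem~\ref{th:gensuffsets}, so the only thing missing is an analysis of the possible values of $u(\md,\md^*) - u(\md',\md^*)$ when $\md \sim \md'$.

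First I would fix neighbours $\md \sim \md'$, say with $d_j \neq d'_j$ and $d_i = d'_i$ for all $i \neq j$, and compute
\[
u(\md,\md^*) - u(\md',\md^*) \;=\; k\bigl[\h(\md',\md^*) - \h(\md,\md^*)\bigr]
\]
for an arbitrary $\md^* \in D^n$. Since $\md$ and $\md'$ agree off row $j$, the two Hamming distances can differ only in their row-$j$ contribution, so a short case analysis on whether $d^*_j$ equals $d_j$, equals $d'_j$, or equals neither shows that this difference lies in $\{-k,0,k\}$.

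Consequently the set $\S$, on which the difference is strictly positive, consists exactly of those $\md^*$ with $d^*_j = d_j$, and on $\S$ the difference is constantly $k$. In the notation of the previous subsection this means $s = 1$, $\alpha_1 = k$, and $\St^1 = \S$. Theorem~\ref{th:gensuffsets} then says that if (\ref{eq:dp}) holds on the single partition block $\St^1 = \S$, it holds on every $A \subseteq \S$; Theorem~\ref{th:dexpsuff1} in turn lifts the inequality from all subsets of $\S$ to all subsets of $D^n$, which is the desired conclusion.

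There is no real obstacle here: the only technical content is the three-line case analysis establishing $u(\md,\md^*) - u(\md',\md^*) \in \{-k,0,k\}$, and the corollary then follows by chaining Theorems~\ref{th:gensuffsets} and~\ref{th:dexpsuff1}.
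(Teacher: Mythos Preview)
Your proposal is correct and follows essentially the same route as the paper: observe that for neighbours $\md\sim\md'$ the Hamming distances differ by at most one, so $u(\md,\md^*)-u(\md',\md^*)\in\{-k,0,k\}$, whence $\alpha=\{k\}$ and $\St^1=\S$; then chain Theorem~\ref{th:gensuffsets} (from $\S$ to all $A\subseteq\S$) with Theorem~\ref{th:dexpsuff1} (from all $A\subseteq\S$ to all $A\subseteq D^n$). Your ordering of the two theorem invocations is in fact cleaner than the paper's concluding sentence, which cites them in the reverse order.
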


	\begin{proof}
	First note that, for all $\md\sim\md'\in D^n$, $$\h(\md',\md^*)-\h(\md,\md^*)\in\{-1,0,1\},$$ hence for all $\md^*\in\S$, $\h(\md',\md^*)-\h(\md,\md^*)=1$. As $U$ is linear, the set $\alpha$ reduces to a singleton set,
	\begin{align*}
	\alpha&=\{-k\h(\md,\md^*)+k\h(\md',\md^*)\mid\md^*\in\S\}\\
	&=\{k\},
	\end{align*}
	and hence $\S=\St^1$.

	We can then conclude that if (\ref{eq:dp}) holds on $\S$, it must hold on all subsets of $\S$ (by Theorem~\ref{th:dexpsuff1}) and also on all subsets of $D^n$ (by Theorem~\ref{th:gensuffsets}).
	\end{proof}

We have established that, for the discrete exponential mechanism satisfying (\ref{eq:demham}) and for a given neighbouring pair of databases $\md\sim\md'$, to check for differential privacy on all possible subsets of the database space $D^n$, we need only check a single set $\S$.

It is now a relatively simple task to establish conditions on the response mechanism for differential privacy. For the following theorem, we assume that $\delta<1$ (note that all mechanisms are trivially ($\epsilon$,1)-differentially private).

\begin{theorem}[Condition for differential privacy]\label{th:dexp}
Let $\delta<1$. A discrete exponential mechanism $\{X_\md\}$ satisfying (\ref{eq:demham}) is {\ab}-differentially private if and only if
\begin{equation}\label{eq:dp2}
e^k\le\frac{e^\epsilon+m\delta}{1-\delta}.
\end{equation}
\end{theorem}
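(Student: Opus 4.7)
The plan is to leverage Corollary~\ref{cr:dexpsuff} to reduce verifying \eqref{eq:dp} over all subsets of $D^n$ to verifying it on the single set $\S$ for each neighbouring pair $\md \sim \md'$. Since the inequality on $\S$ is a single scalar condition, I expect it to rearrange directly into the stated inequality \eqref{eq:dp2}.

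First, fix neighbours $\md \sim \md'$ that differ on row $j$. Because $C_\md = C$ is fixed, $\md^* \in \S$ is characterised by $\h(\md,\md^*) < \h(\md',\md^*)$. Examining the three cases for $\md^*_j$ (equals $\md_j$, equals $\md'_j$, or equals neither), this happens exactly when $\md^*_j = \md_j$; hence
\[
\S = \{\md^* \in D^n : \md^*_j = \md_j\},
\]
and for every $\md^* \in \S$ we have $\h(\md',\md^*) = \h(\md,\md^*)+1$.

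Next, compute $\Prob(X_\md \in \S)$. Since $\md^*_j$ is fixed, only the remaining $n-1$ coordinates contribute to $\h(\md,\md^*)$, so using the Proposition giving $C=(1+m/e^k)^{-n}$,
\[
\Prob(X_\md \in \S) = C \sum_{i=0}^{n-1}\binom{n-1}{i} m^i e^{-ik} = C\left(1+\frac{m}{e^k}\right)^{n-1} = \frac{e^k}{e^k+m}.
\]
The relation $\h(\md',\md^*) = \h(\md,\md^*)+1$ on $\S$ then gives
\[
\Prob(X_{\md'} \in \S) = e^{-k}\Prob(X_\md \in \S) = \frac{1}{e^k+m}.
\]

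Substituting into $\Prob(X_\md \in \S) \le e^\epsilon \Prob(X_{\md'} \in \S) + \delta$ and multiplying through by $e^k + m$ yields $e^k \le e^\epsilon + \delta(e^k + m)$, which rearranges (using $\delta < 1$) to \eqref{eq:dp2}. Since the final condition is independent of which neighbouring pair $\md\sim\md'$ was fixed (and the roles of $\md,\md'$ are interchangeable, producing $\S_{\md',\md}$ with an identical computation), this single inequality is equivalent to \eqref{eq:dp} holding on $\S$ for every neighbouring pair, hence by Corollary~\ref{cr:dexpsuff} equivalent to \ab-differential privacy. I do not anticipate any real obstacle here: the only subtlety is identifying the explicit form of $\S$ under the hamming utility, after which the reduction turns a quantifier-laden privacy condition into the scalar inequality \eqref{eq:dp2}.
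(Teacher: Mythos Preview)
Your proposal is correct and follows essentially the same route as the paper: invoke Corollary~\ref{cr:dexpsuff} to reduce to the single set $\S$, compute $\Prob(X_\md\in\S)=(1+m/e^k)^{-1}=e^k/(e^k+m)$ via the binomial sum, use $\Prob(X_{\md'}\in\S)=e^{-k}\Prob(X_\md\in\S)$, and rearrange. Your explicit identification of $\S$ as $\{\md^*:\md^*_j=\md_j\}$ is a small elaboration the paper leaves implicit, but the argument is otherwise the same.
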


	\begin{proof}
	Fix $\md\sim\md'\in D^n$. By Corollary~\ref{cr:dexpsuff} we need only satisfy (\ref{eq:dp}) on $\S$ for it to hold on all $A\subseteq D^n$. Hence we need
	$$\Prob(X_\md\in\S)\le e^\epsilon \Prob(X_{\md'}\in\S)+\delta.$$
	By definition, $\Prob(X_{\md'}\in\S)=e^{-k}\Prob(X_\md\in\S)$, therefore the mechanism will be differentially private if and only if
	\begin{equation}\label{eq:th5}
	1\le e^{\epsilon-k}+\frac{\delta}{\Prob(X_\md\in\S)}.
	\end{equation}

	Now consider $\Prob(X_\md\in\S)$. By definition, each $\md^*\in\S$ lies a hamming distance of $\h(\md, \md^*)+1$ from $\md'$. Therefore, the number of $\md^*$ where $\h(\md, \md^*)=c$ is $\binom{n-1}{c}m^c$ ($c$ elements must be changed, but not the element on which $\md$ and $\md'$ differ, to ensure $\h(\md, \md^*)+1=\h(\md', \md^*)$). Hence,
	\begin{align*}
	\Prob(X_\md\in\S) &= C\sum_{i=0}^{n-1}\binom{n-1}{i}m^i e^{-ik}\\
	&=C\left(1+\frac{m}{e^k}\right)^{n-1}\\
	&=\left(1+\frac{m}{e^k}\right)^{-1}.
	\end{align*}
	Substituting this result into (\ref{eq:th5}) gives
	$$1\le e^{\epsilon-k}+\delta\left(1+\frac{m}{e^k}\right),$$
	and solving for $e^k$ completes the proof.
	\end{proof}

\textbf{Remark:} For ($\epsilon$,0)-differential privacy, we require $k\le\epsilon$.

\textbf{Discussion:} If we convert our discrete exponential mechanism back into the form of the exponential mechanism, McSherry and Talwar \cite{MT07} tell us that the mechanism satisfies $2\epsilon$-differential privacy at worst. However, we have shown in Theorem~\ref{th:dexp} that the mechanism satisfies $\epsilon$-differential privacy, and that this condition is tight (necessary and sufficient, hence we can do no better). This improves on the looser bound in McSherry and Talwar's proof, as it underestimates the differential privacy achieved by a factor of two. Their mechanism is also limited to $\epsilon$-differential privacy only ($\delta=0$), whereas we can account for {\ab}-differential privacy ($\delta>0$).

We define the error of a mechanism, $\E$, to be the largest mean hamming distance between the input and sanitised databases:
\begin{equation}\label{eq:err}
\E = \max_{\md\in D^n} \mathbb{E}[\h(X_\md, \md)].
\end{equation}
Using the differential privacy constraints established in Theorem~\ref{th:dexp}, we can now determine error bounds of the mechanism.

\begin{theorem}[Error]\label{th:demerr}
The error, $\E$, of a discrete exponential mechanism satisfying (\ref{eq:demham}) and which is {\ab}-differentially private, satisfies
\begin{equation}
\frac{1-\delta}{1+\frac{e^\epsilon}{m}}\le \frac{\E}{n}\le\frac{m}{m+1}.
\end{equation}
\end{theorem}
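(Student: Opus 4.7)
The plan is to compute the expected hamming distance exactly and then use the differential privacy characterisation from Theorem~\ref{th:dexp} together with the non-negativity of the privacy parameter $k$ to sandwich it.

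First I would observe that, because the probability mass $\Prob(X_\md = \md') = C e^{-k\h(\md,\md')}$ depends on $\md$ and $\md'$ only through the hamming distance, the expectation $\mathbb{E}[\h(X_\md,\md)]$ is independent of $\md$, so the outer maximum in (\ref{eq:err}) is a trivial operation: $\E = \mathbb{E}[\h(X_\md,\md)]$ for any $\md$. Next I would enumerate databases by distance: there are exactly $\binom{n}{c}m^c$ databases $\md'$ with $\h(\md,\md') = c$, so
\begin{equation*}
\E = C\sum_{c=0}^{n} c\binom{n}{c}m^{c}e^{-kc}.
\end{equation*}

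Using the identity $\sum_{c=0}^{n} c\binom{n}{c}t^{c} = nt(1+t)^{n-1}$ with $t = m/e^{k}$, together with $C = (1+m/e^{k})^{-n}$ from the preceding proposition, this collapses to the closed form
\begin{equation*}
\E \;=\; \frac{nm}{m+e^{k}}, \qquad \text{i.e.} \qquad \frac{\E}{n} \;=\; \frac{m}{m+e^{k}}.
\end{equation*}
Both bounds now follow by controlling $e^{k}$. Since $k \geq 0$ we have $e^{k} \geq 1$, which immediately yields the upper bound $\E/n \leq m/(m+1)$. For the lower bound, Theorem~\ref{th:dexp} gives the necessary inequality $e^{k} \leq (e^{\epsilon} + m\delta)/(1-\delta)$, so
\begin{equation*}
m + e^{k} \;\le\; \frac{m(1-\delta) + e^{\epsilon} + m\delta}{1-\delta} \;=\; \frac{m + e^{\epsilon}}{1-\delta},
\end{equation*}
and inverting gives $\E/n \geq m(1-\delta)/(m+e^{\epsilon}) = (1-\delta)/(1+e^{\epsilon}/m)$, as required.

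There is no real obstacle here beyond the closed-form evaluation of the sum; the argument is essentially a calculation pinned at both ends by the definition of the mechanism (for the upper bound, via $k \geq 0$) and by the tight privacy characterisation just proved (for the lower bound). The only point worth stating explicitly in the write-up is the symmetry observation that removes the maximum in~(\ref{eq:err}), since without it one might worry about worst-case $\md$ behaviour.
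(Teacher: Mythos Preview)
Your proof is correct and is essentially the same as the paper's: both count databases at each hamming distance, evaluate the resulting binomial-type sum to obtain the closed form $\E/n = m/(m+e^{k})$, and then apply $k\ge 0$ for the upper bound and Theorem~\ref{th:dexp} for the lower. Your explicit remark that the symmetry of the mechanism makes the maximum over $\md$ trivial is a small clarification the paper leaves implicit, but otherwise the arguments coincide.
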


	\begin{proof}
	Let $\md\in D^n$. Then $\Prob(X_\md=\md')=C e^{-k\h(\md, \md')}$ and $|\{\md': \h(\md, \md')=c\}|=\binom{n}{c}m^c$ for all $c\in\{0, 1, \dots, n\}$. Hence,
	\begin{align*}
	\mathbb{E}[\h(X_\md, \md)] &= C \sum_{i=0}^n i \binom{n}{i} m^i e^{-ik}\\
	&= Cn \sum_{i=1}^n \binom{n-1}{i-1}\left(\frac{m}{e^k}\right)^i\\
	&= Cn\frac{m}{e^k} \sum_{I=0}^{n-1}\binom{n-1}{I} \left(\frac{m}{e^k}\right)^I\\
	&= Cn \frac{m}{e^k} \left(1+\frac{m}{e^k}\right)^{n-1}\\
	&= n \frac{m}{e^k} \left(1+\frac{m}{e^k}\right)^{-1}\\
	&= \frac{n}{1+\frac{e^k}{m}}.
	\end{align*}
	
	On average, the number of entries that will change is therefore
	$$\frac{\E}{n} =\frac{\max_{\md\in D^n} \mathbb{E}[\h(X_\md, \md)]}{n}= \frac{1}{1+\frac{e^k}{m}}.$$
	
	Since the response mechanism is differentially private, Theorem~\ref{th:dexp} tells us that $e^k\le \frac{e^\epsilon +m\delta}{1-\delta}$, and $k\ge 0$ by definition, hence,
	$$\frac{1-\delta}{1+\frac{e^\epsilon}{m}}\le \frac{\E}{n}\le\frac{m}{m+1}.$$
	\end{proof}

\textbf{Remark:} This lower bound bound is tight and can be achieved by setting $k=\ln\left(\frac{e^\epsilon +m\delta}{1-\delta}\right)$ (see Theorem~\ref{th:dexp}).

%%%%%%%%%%%%%%%%%%%%%%%%%%%%%%%%%%%%%%%%%%%%%%%%%%%%%%%%%%%%%%%%%%%%%%%%
%%%%%%%%%%%%%%%%%%%%%%%%%%%%%%%%%%%%%%%%%%%%%%%%%%%%%%%%%%%%%%%%%%%%%%%%
\section{Product Sanitisation}\label{sc:san}

The results of Section~\ref{sc:dem} give a clear framework on how to create differentially private mechanisms for any type of discrete data, particularly categorical data using hamming distance, and to obtain tight (necessary and sufficient) conditions for differential privacy. However, this mechanism requires the creation of a separate probability distribution for each of the $(m+1)^n$ unique databases.

In this section, we present a simple method for realising the discrete exponential mechanism.

%%%%%%%%%%%%%%%%%%%%%%%%%%%%%%%%%%%%%%%%%%%%%%%%%%%%%%%%%%%%%%%%%%%%%%%%
\subsection{Response mechanism}\label{sc:prod}
A product sanitisation mechanism is one where the database is sanitised row-by-row. The following is the definition of such a mechanism, following the same notation as in \cite{HLM15}.

\begin{definition}[Product sanitisation mechanism]\label{df:prod}
Given a 1-dimensional response mechanism for the identity query, $\{X_d:\Omega\to D\mid d\in D\}$ (the parent mechanism), the \emph{product sanitisation response mechanism} is defined to be the set of measurable mappings
$$\{X_\md:\Omega\to D^n \mid \md\in D^n\}$$
given by
\begin{equation}\label{eq:ind}
X_\md=(X_\md^1, \dots, X_\md^n),
\end{equation}
where the $X_\md^i$ are independent and each $X_\md^i$ has the same distribution as $X_{d_i}$, for all $\md\in D^n, i\in \{1,\dots,n\}$.
\end{definition}

In this setting, each row of the sanitised database is represented by an independent random variable, as if the database represents $n$ 1-dimensional databases, each sanitised independently. Realising this framework therefore only requires the creation of $m+1$ probability distributions which are then copied to create distributions on each database.

We now recall Theorem~5 of \cite{HLM15}, which states that differential privacy on a product sanitisation mechanism is guaranteed when its parent mechanism is differentially private.

\begin{theorem}\label{th:prod}
Consider a family $\{X_d\mid d\in D\}$ of measurable mappings and assume that
$$\Prob(X_d\in A)\le e^\epsilon\Prob(X_{d^\prime}\in A)+\delta,$$
for all $d, d^\prime\in D$ and all $A\subseteq D$. Let $\{X_\md\mid\md\in D^n\}$ be a product sanitisation mechanism. Then
$$\Prob(X_\md\in A)\le e^\epsilon\Prob(X_{\md'}\in A)+\delta,$$
for all $\md\sim\md'\in D^n$ and all $A\subseteq D^n$.
\end{theorem}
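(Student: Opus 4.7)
The plan is to exploit independence of the coordinates and the fact that neighbouring databases differ in exactly one row, so that the multi-dimensional inequality reduces to an application of the one-dimensional privacy guarantee on a single coordinate.

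First I would fix $\md \sim \md' \in D^n$ and let $j$ denote the unique index where $d_j \ne d_j'$. By Definition~\ref{df:prod}, the coordinate distributions of $X_\md$ and $X_{\md'}$ agree on every row $i \ne j$ (since they match $X_{d_i}$), and differ only in the $j$-th coordinate (where they match $X_{d_j}$ and $X_{d_j'}$ respectively). For a given $A \subseteq D^n$ and each fixed tuple $x_{-j} = (x_1,\dots,x_{j-1},x_{j+1},\dots,x_n) \in D^{n-1}$, define the slice
$$A(x_{-j}) = \{y \in D \mid (x_1,\dots,x_{j-1},y,x_{j+1},\dots,x_n) \in A\} \subseteq D.$$

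Next, using independence of the coordinates, I would expand
$$\Prob(X_\md \in A) = \sum_{x_{-j} \in D^{n-1}} \left(\prod_{i \ne j} \Prob(X_\md^i = x_i)\right)\Prob(X_\md^j \in A(x_{-j})).$$
The hypothesis applied to the parent mechanism with $d = d_j, d' = d_j'$ and the measurable set $A(x_{-j}) \subseteq D$ gives
$$\Prob(X_\md^j \in A(x_{-j})) \le e^\epsilon \Prob(X_{\md'}^j \in A(x_{-j})) + \delta.$$
Substituting this bound and using that $\Prob(X_\md^i = x_i) = \Prob(X_{\md'}^i = x_i)$ for $i \ne j$, the first resulting term reassembles into $e^\epsilon \Prob(X_{\md'} \in A)$ by reversing the independence expansion, and the $\delta$ term is multiplied by $\sum_{x_{-j}} \prod_{i \ne j} \Prob(X_\md^i = x_i) = 1$.

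There is no genuine obstacle here; the only care required is in the notation for the slice and verifying that the marginal distributions on coordinates $i \ne j$ truly coincide under $\md$ and $\md'$, which follows directly from the neighbour relation $\h(\md,\md')=1$ and Definition~\ref{df:prod}. Combining the two bounds yields $\Prob(X_\md \in A) \le e^\epsilon \Prob(X_{\md'} \in A) + \delta$, and since $\md \sim \md'$ and $A$ were arbitrary, the product sanitisation satisfies (\epsilon,\delta)-differential privacy.
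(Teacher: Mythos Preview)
Your argument is correct. Note, however, that the paper does not actually prove this theorem in situ; it is merely recalled as Theorem~5 of \cite{HLM15}, so there is no in-paper proof to compare against. That said, your approach---fixing the unique coordinate $j$ on which $\md$ and $\md'$ differ, slicing $A$ along that coordinate, applying the one-dimensional hypothesis to each slice $A(x_{-j})$, and then using independence together with the fact that the marginals on the remaining $n-1$ coordinates coincide and sum to $1$---is the natural and standard route for lifting a per-row privacy guarantee to the product mechanism, and is essentially what the cited result in \cite{HLM15} establishes in the more general measurable setting.
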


The converse of this result was proven in Lemma~3 of \cite{HLM15}. However, we are able to take a simpler approach to showing this converse by exploiting the finiteness of $D$.

\begin{corollary}[Parent mechanism]\label{cr:prod}
A product sanitisation mechanism $\{X_\md\mid\md\in D^n\}$ is {\ab}-differentially private if and only if its parent mechanism $\{X_d\mid d\in D\}$ is {\ab}-differentially private.
\end{corollary}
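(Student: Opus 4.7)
The plan is to prove the non-trivial direction: if the product mechanism $\{X_\md\mid \md\in D^n\}$ is \ab-differentially private, then so is the parent mechanism $\{X_d\mid d\in D\}$. (The forward direction is exactly Theorem~\ref{th:prod}.) The strategy is to lift any potential violation of \ab-differential privacy at the parent level to a violation at the product level, using the independence in Definition~\ref{df:prod} together with the freedom to choose the remaining rows arbitrarily.

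Concretely, fix $d, d^\prime \in D$ and a subset $A \subseteq D$. Pick any values $d_2,\dots,d_n \in D$ and form the neighbouring databases $\md=(d,d_2,\dots,d_n)$ and $\md'=(d^\prime,d_2,\dots,d_n)$ in $D^n$. Consider the ``cylinder'' set $B = A \times D^{n-1} \subseteq D^n$. By the independence of the coordinates in \eqref{eq:ind} and the fact that $X_\md^i$ and $X_{d_i}$ have the same distribution, one computes
\begin{equation*}
\Prob(X_\md \in B) = \Prob(X_\md^1 \in A)\prod_{i=2}^{n}\Prob(X_\md^i \in D) = \Prob(X_d \in A),
\end{equation*}
and likewise $\Prob(X_{\md'} \in B) = \Prob(X_{d^\prime}\in A)$. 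Applying \ab-differential privacy of the product mechanism on the neighbouring pair $\md\sim\md'$ and the set $B$ then yields
\begin{equation*}
\Prob(X_d \in A) \le e^\epsilon \Prob(X_{d^\prime}\in A) + \delta,
\end{equation*}
which is exactly what is required. The case $n=1$ is trivial since the product mechanism and the parent mechanism then coincide.

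There is no real obstacle here: the argument is essentially a marginalisation computation enabled by the product structure, together with Theorem~\ref{th:prod} for the easy direction. The only point to be a little careful about is the reduction to a single ``cylinder'' set $B$, which cleanly uses the independence assumption to strip out the coordinates $i\ge 2$ from both sides of the differential privacy inequality. This is where the approach is simpler than the more general measure-theoretic argument of Lemma~3 of \cite{HLM15}: finiteness of $D$ (indeed of $D^n$) lets us choose $d_2,\dots,d_n$ outright rather than having to approximate product measures on arbitrary measurable spaces.
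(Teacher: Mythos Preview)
Your proof is correct and follows essentially the same approach as the paper: both invoke Theorem~\ref{th:prod} for the ``$\Leftarrow$'' direction and, for ``$\Rightarrow$'', fix $d\ne d'$ and $A\subseteq D$, build neighbouring databases differing only in the first coordinate, take the cylinder set $A\times D^{n-1}$, and use the product structure to reduce $\Prob(X_\md\in A\times D^{n-1})$ to $\Prob(X_d\in A)$. The only cosmetic difference is that the paper explicitly restricts to $d\ne d'$ from the outset (so that $\md\sim\md'$ genuinely holds), whereas you leave this implicit; you may wish to note that the case $d=d'$ is trivial.
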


	\begin{proof}
	``$\Leftarrow$'': Theorem~\ref{th:prod}.

	``$\Rightarrow$'': Let $A\subseteq D$ and $d\ne d^\prime\in D$ be given. Define $A^\prime=A\times D^{n-1}\subseteq D^n$ and $\md,\md'\in D^n$, such that $d_1=d$, $d^\prime_1=d^\prime$ and $d_i=d^\prime_i$ for all $i\in\{2, \dots, n\}$, hence $\md\sim\md'$. Since $\{X_\md\}$ is differentially private by assumption,
	$$\Prob(X_\md\in A^\prime)\le e^\epsilon\Prob(X_{\md'}\in A^\prime)+\delta.$$
	However, since $\{X_\md\}$ is a product sanitisation mechanism,
	\begin{align*}
	\Prob(X_\md\in A^\prime)  &= \Prob(X_\md^1\in A)\times\prod_{i=2}^n\Prob(X_\md^i\in D)\\
	&=\Prob(X_\md^1\in A)\\
	&=\Prob(X_{d_1}\in A).
	\end{align*}
	Hence,
	$$\Prob(X_d\in A) \le e^\epsilon\Prob(X_{d^\prime}\in A)+\delta,$$
	for all $d\ne d^\prime\in D$ and $A\subseteq D$, as required.
	\end{proof}

Hence, for a product sanitisation mechanism to be differentially private, we need only show that its parent mechanism is differentially private.

For the remainder of this section, given $0\le p\le\frac{1}{m+1}$, the parent mechanism $\{X_d\mid d\in D\}$ of the product sanitisation is defined such that
\begin{equation}\label{eq:uni}
\Prob(X_d=d) = 1-pm, \quad \Prob(X_d=d^\prime) = p,
\end{equation}
for every $d^\prime\in D\setminus \{d\}$.

\textbf{Remark:} $p=\frac{1}{m+1}$ represents the case of releasing uniform noise (i.e. no information), while decreasing $p$ reduces the error of the mechanism.

\textbf{Remark:} By requiring $p\le\frac{1}{m+1}$, we get $\Prob(X_d=d)\ge\Prob(X_d=d^\prime)$ for every $d^\prime\in D$. Therefore, the mechanism is at least as likely to return the correct answer as any one incorrect answer, an entirely reasonable assumption.

	\begin{example}[Categorical data II]\label{eg:prod}
	Using the same set-up as in Example~\ref{eg:prelim}, one such permissable $p$ would be $p=0.1$, since $0\le0.1\le\frac{1}{5}$. Then, if $d$ were to be the value `Television', and $d^\prime$ the value `Cars', $\Prob(X_d=d)=0.6$, and $\Prob(X_d=d^\prime)=0.1$.
	
	We then sanitise the $n$-row database $\md$ in the same way, working through the database one row at a time.
	\end{example}

In the first main result of this section, we show that the product sanitisation mechanism and the discrete exponential mechanism are equivalent, despite being constructed in different ways (this is subject to the discrete exponential mechanism satisfying (\ref{eq:demham}) and the product sanitisation mechanism satisfying (\ref{eq:uni})).

\begin{theorem}[Equivalence]\label{th:equiv}
Let $\{X_\md\mid\md\in D^n\}$ be a discrete exponential mechanism satisfying (\ref{eq:demham}), and $\{Y_\md\mid\md\in D^n\}$ be a product sanitisation response mechanism, whose parent mechanism satisfies (\ref{eq:uni}). Then the probability mass functions of $X_\md$ and $Y_\md$ are identical, for every $\md\in D^n$, when
\begin{equation}\label{eq:equiv}
e^k=\frac{1}{p}-m.
\end{equation}
\end{theorem}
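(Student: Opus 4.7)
The plan is a direct computation: write both probability mass functions in closed form, substitute the relation $e^k = 1/p - m$, and check that the two sides agree.

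First I would fix $\md, \md' \in D^n$ and set $h = \h(\md, \md')$. For the product sanitisation mechanism, since the rows sanitise independently and each row either retains its value (probability $1-pm$) or is swapped to one of the $m$ other values (probability $p$ each), a direct product calculation yields
\begin{equation*}
\Prob(Y_\md = \md') = (1-pm)^{n-h}\, p^h.
\end{equation*}
For the discrete exponential mechanism satisfying (\ref{eq:demham}), the preceding proposition gives
\begin{equation*}
\Prob(X_\md = \md') = C\, e^{-kh} = \left(1 + \frac{m}{e^k}\right)^{-n} e^{-kh}.
\end{equation*}

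Next I would substitute $e^k = 1/p - m$ into the right-hand side. A short calculation gives $m/e^k = mp/(1-mp)$, hence $1 + m/e^k = 1/(1-mp)$, so that $C = (1-mp)^n$. Similarly, $e^{-k} = p/(1-mp)$, which yields $e^{-kh} = p^h/(1-mp)^h$. Multiplying these together produces
\begin{equation*}
C\, e^{-kh} = (1-mp)^{n-h}\, p^h,
\end{equation*}
which coincides with $\Prob(Y_\md = \md')$. Since $\md'$ was arbitrary, the PMFs of $X_\md$ and $Y_\md$ agree on every point of $D^n$, proving equivalence.

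There is essentially no obstacle here — the result follows from straightforward algebra once both distributions are written in terms of $h(\md,\md')$. The only mild point worth checking is that the substitution is admissible: since $0 \le p \le 1/(m+1)$, one has $1/p - m \ge 1$, so $e^k \ge 1$, i.e.\ $k \ge 0$, consistent with the privacy parameter in (\ref{eq:demham}). This observation can be added as a short remark at the end.
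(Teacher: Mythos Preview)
Your proof is correct and follows essentially the same direct computation as the paper: both write out the two PMFs in closed form as functions of $h=\h(\md,\md')$ and then verify algebraically that they coincide under the relation $e^k=1/p-m$. The only cosmetic difference is that the paper \emph{derives} the relation by equating the base $e^{-k}$ with $p/(1-pm)$ and separately the constant $(1+m/e^k)^{-n}$ with $(1-pm)^n$, whereas you substitute the relation and check the match; your admissibility remark on the parameter range is a nice addition not present in the paper.
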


	\begin{proof}
	Let $\md\in D^n$, then
	\begin{align*}
	\Prob(X_\md=\md') &= \left(1+\frac{m}{e^k}\right)^{-n}e^{-k\h(\md, \md')},\\
	\Prob(Y_\md=\md') &= (1-pm)^n\left(\frac{p}{1-pm}\right)^{\h(\md, \md')},
	\end{align*}
	for all $\md'\in D^n$.
	
	For the two mechanisms to be equivalent, we need $\frac{p}{1-pm}=e^{-k}$, or, rewriting, $\frac{1-pm}{p}=e^k$.

	We also need $1-pm=\frac{e^k}{e^k+m}$, which can be rewritten as $\frac{1}{1-pm}=1+\frac{m}{e^k}$, or $\frac{1-pm}{p}=e^k$.

	Hence,  the mechanisms$\{X_\md\}$ and $\{Y_\md\}$ are equivalent when $e^k=\frac{1}{p}-m$ or $p=\frac{1}{e^k+m}$.
	\end{proof}

%%%%%%%%%%%%%%%%%%%%%%%%%%%%%%%%%%%%%%%%%%%%%%%%%%%%%%%%%%%%%%%%%%%%%%%%
\subsection{Alternative proofs of Theorems~\ref{th:dexp} and~\ref{th:demerr}}
We have already established that the discrete exponential mechanism satisfying (\ref{eq:demham}) and the product sanitisation mechanism satisfying (\ref{eq:uni}) are equivalent, meaning the results of Theorems~\ref{th:dexp} and \ref{th:demerr} also apply to the product sanitisation mechanism in this particular set-up. In this sub-section, we provide alternative methods of proof for these theorems that make use of the specific product sanitisation of the mechanism, for the additional insight provided.

The proof of the following theorem first appeared in \cite{HLM15}, but is included here for completeness.

\begin{theorem}[Condition for differential privacy]\label{th:uni}
A product sanitisation mechanism, whose parent mechanism satisfies (\ref{eq:uni}), is {\ab}-differentially private if and only if
\begin{equation}\label{eq:unidp}
p\ge \frac{1-\delta}{e^\epsilon+m}.
\end{equation}
\end{theorem}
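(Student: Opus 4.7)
The strategy is to use Corollary~\ref{cr:prod} to collapse the $n$-row problem to the single-row parent mechanism $\{X_d\mid d\in D\}$, and then identify the worst-case subset $A\subseteq D$ for the differential privacy inequality directly on the small space $D$.

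First I would apply Corollary~\ref{cr:prod} to rewrite the claim as: the parent mechanism satisfying (\ref{eq:uni}) is {\ab}-differentially private, i.e.\
$$\Prob(X_d\in A)\le e^\epsilon\Prob(X_{d'}\in A)+\delta$$
holds for every $d\ne d'\in D$ and every $A\subseteq D$, if and only if (\ref{eq:unidp}) holds.

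Next, fixing an arbitrary pair $d\ne d'\in D$, I would decompose the excess as
$$\Prob(X_d\in A)-e^\epsilon\Prob(X_{d'}\in A)=\sum_{d^*\in A}\bigl(\Prob(X_d=d^*)-e^\epsilon\Prob(X_{d'}=d^*)\bigr),$$
and analyse the per-element contributions under (\ref{eq:uni}). For $d^*\notin\{d,d'\}$ both probabilities equal $p$ so the contribution is $p(1-e^\epsilon)\le0$. For $d^*=d'$ the contribution is $p-e^\epsilon(1-pm)$, which is nonpositive because $p\le\frac{1}{m+1}$ gives $p\le 1-pm\le e^\epsilon(1-pm)$. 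For $d^*=d$ the contribution is $1-pm-e^\epsilon p$. Hence the excess is maximised precisely by taking $A=\{d\}$ (up to zero-contribution elements), and the full DP condition reduces to the single scalar inequality $1-pm-e^\epsilon p\le\delta$.

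Finally, rearranging $1-pm-e^\epsilon p\le\delta$ yields $p(e^\epsilon+m)\ge 1-\delta$, which is (\ref{eq:unidp}); since the worst-case subset attains the extreme and the argument is reversible, the condition is both necessary and sufficient. The main obstacle is the per-element sign analysis, and in particular confirming that the $d^*=d'$ contribution is nonpositive using the standing hypothesis $p\le\frac{1}{m+1}$. An alternative route would be to invoke the sufficient-set machinery (Theorem~\ref{th:dexpsuff1}) applied to the parent mechanism to conclude that the one-element set $\{d\}$ is the only subset of interest, leading to the same scalar inequality.
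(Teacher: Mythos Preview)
Your argument is correct. Both you and the paper reduce the question to the parent mechanism via Corollary~\ref{cr:prod} and then isolate the singleton $\{d\}$ as the binding constraint, so the overall strategy is the same. The packaging of the sufficiency direction differs: the paper runs a four-case analysis on whether $d$ and $d'$ lie in $A$, explicitly computing $\Prob(X_d\in A)$ and $\Prob(X_{d'}\in A)$ in each case and checking the DP inequality directly; you instead write $\Prob(X_d\in A)-e^\epsilon\Prob(X_{d'}\in A)$ as a sum of per-element contributions and observe that every term except the one for $d^*=d$ is nonpositive (using $p\le\tfrac{1}{m+1}$ to handle $d^*=d'$). Your decomposition is a bit cleaner because it simultaneously shows that $A=\{d\}$ is the worst case and dispatches all other subsets, whereas the paper's case analysis is more explicit but slightly longer. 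Either way the DP condition collapses to the scalar inequality $1-pm\le e^\epsilon p+\delta$, which rearranges to~(\ref{eq:unidp}).
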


	\begin{proof}
	By Corollary~\ref{cr:prod}, we need only be concerned with proving differential privacy on $\{X_d\mid d\in D\}$ for it to hold on $\{X_\md\mid\md\in D^n\}$.

	``$\Rightarrow$'': Assume $\{X_d\}$ is differentially private and let $d\ne d^\prime\in D$ be given. Applying the definition of differential privacy to the singleton set $A=\{d\}$ gives$$1-pm\le e^\epsilon p+\delta.$$Rearranging this gives (\ref{eq:unidp}).
	
	``$\Leftarrow$'': Assume (\ref{eq:unidp}) holds and let $d\ne d^\prime\in D$ and $A\subseteq D$ be given. There are four cases to consider on $A$:
	\begin{enumerate}
	\item $d, d^\prime \notin A$: Then $\Prob(X_d\in A) = \Prob(X_{d^\prime}\in A) = p|A|$ and differential privacy holds trivially.
	\item $d, d^\prime \in A$: Then $\Prob(X_d\in A) = \Prob(X_{d^\prime}\in A) = p(|A|-1)+1-pm=p(|A|-m-1)+1$ and differential privacy holds trivially.
	\item $d^\prime \in A, d\notin A$: Then $\Prob(X_d\in A) = p|A|$ and $\Prob(X_{d^\prime}\in A) = p(|A|-1)+1-pm$, so $\Prob(X_d\in A)\le\Prob(X_{d^\prime}\in A)$ and differential privacy holds.
	\item $d \in A, d^\prime\notin A$: Then \begin{align*}\Prob(X_d\in A) &= p(|A|-m-1)+1,\\ \Prob(X_{d^\prime}\in A)&=p|A|.\end{align*}
	From (\ref{eq:unidp}), we have
	\begin{align*}
	1-pm&\le e^\epsilon p+\delta\\
	&= e^\epsilon (p|A|-p|A|+p)+\delta\\
	&\le e^\epsilon p|A| - p(|A|-1)+\delta,
	\end{align*}
	since $|A|\ge1$ ($d\in A$ by hypothesis). Rewriting the above,
	$$p(|A|-m-1)+1 \le e^\epsilon \left(p|A|\right)+\delta,$$
	hence $\Prob(X_d\in A)\le e^\epsilon \Prob(X_{d^\prime}\in A)+\delta$ and differential privacy holds.
	\end{enumerate}
	\end{proof}

\textbf{Remark:} For ($\epsilon$,0)-differential privacy, we require $p\ge\frac{1}{e^\epsilon+m}$.

\textbf{Remark:} By definition, $p$ is bounded from above by $\frac{1}{m+1}$, so for differential privacy we require
$$\frac{1-\delta}{e^\epsilon+m}\le p\le\frac{1}{m+1}.$$

We now look at an alternative proof of Theorem~\ref{th:demerr} which established error bounds on the mechanism.

\begin{theorem}[Error]\label{th8}
The error, $\E$, of a product sanitisation mechanism, whose parent mechanism satisfies (\ref{eq:uni}) and is {\ab}-differentially private, satisfies
\begin{equation}
\frac{1-\delta}{1+\frac{e^\epsilon}{m}}\le \frac{\E}{n}\le\frac{m}{m+1}.
\end{equation}
\end{theorem}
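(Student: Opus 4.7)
The plan is to exploit the row-by-row independence of the product sanitisation to reduce the Hamming-distance error to a sum of $n$ i.i.d. Bernoulli indicators, and then read off the bounds directly from Theorem~\ref{th:uni} and the admissible range of $p$.

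First I would write, for each fixed $\md \in D^n$,
\[
\h(X_\md, \md) = \sum_{i=1}^n \mathbf{1}_{\{X_\md^i \ne d_i\}},
\]
where the indicators are independent by Definition~\ref{df:prod} and each has the same distribution as $\mathbf{1}_{\{X_{d_i} \ne d_i\}}$. From (\ref{eq:uni}) we get $\Prob(X_{d_i} \ne d_i) = pm$, so by linearity of expectation
\[
\mathbb{E}[\h(X_\md, \md)] = n p m,
\]
which is independent of $\md$. Therefore $\E = \max_{\md \in D^n} \mathbb{E}[\h(X_\md, \md)] = npm$, i.e.\ $\E/n = pm$.

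Next I would invoke the two bounds on $p$. The upper bound on $\E/n$ follows immediately from the standing constraint $p \le \tfrac{1}{m+1}$ imposed in the definition of the parent mechanism, giving $\E/n = pm \le \tfrac{m}{m+1}$. The lower bound comes from Theorem~\ref{th:uni}: since the mechanism is \ab-differentially private, $p \ge \tfrac{1-\delta}{e^\epsilon + m}$, so
\[
\frac{\E}{n} = pm \ge \frac{m(1-\delta)}{e^\epsilon + m} = \frac{1-\delta}{1 + \frac{e^\epsilon}{m}},
\]
which is the claimed lower bound.

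There is essentially no hard step here; the whole proof is effectively one line once the product structure is invoked, which is precisely the ``additional insight'' promised in the preamble to the subsection: the error is exactly $npm$ on every database, so the extremal behaviour reduces to the extremal behaviour of the scalar parameter $p$. The only thing to be slightly careful about is citing the correct direction of the inequality from Theorem~\ref{th:uni} (it provides a lower bound on $p$, which translates into the lower bound on the error, not an upper bound).
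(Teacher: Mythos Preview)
Your proof is correct and follows essentially the same route as the paper: decompose the Hamming distance as a sum of per-row indicators, use linearity of expectation together with (\ref{eq:uni}) to obtain $\E/n = pm$, and then read off the two bounds from $p\le\frac{1}{m+1}$ and Theorem~\ref{th:uni}. The only cosmetic differences are that the paper writes $\h(X_\md^i,d_i)$ in place of your indicator $\mathbf{1}_{\{X_\md^i\ne d_i\}}$ and carries the $\max_{\md}$ through the computation rather than noting up front that the expectation is constant in $\md$; also, the independence you mention is true but not actually needed, since linearity of expectation alone suffices.
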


	\begin{proof}
	Let $\md\in D^n$. Then,
	\begin{align*}
	\mathcal{E}&=\max_{\md\in D^n}\mathbb{E}\left[\h(X_\md, \md)\right]\\
	&= \max_{\md\in D^n}\mathbb{E}\left[\sum_{i=1}^n \h(X_\md^i, d_i)\right]\\
	&=\max_{\md\in D^n}\sum_{i=1}^n \mathbb{E}\left[\h(X_\md^i, d_i)\right]\\
	&=\max_{\md\in D^n}\sum_{i=1}^n \Prob(X_{d_i}\ne d_i)\\
	&=n\left(\max_{d\in D}\left(1-\Prob(X_d=d)\right)\right)\\
	&= npm.
	\end{align*}
	
	On average, the number of entries that will change is therefore
	$$\frac{\E}{n}=pm.$$
	
	As the mechanism satisfies {\ab}-differential privacy, $p\ge \frac{1-\delta}{e^\epsilon+m}$ by Theorem~\ref{th:uni}, and since $p\le\frac{1}{m+1}$ by definition,
	$$\frac{1-\delta}{1+\frac{e^\epsilon}{m}}\le \frac{\E}{n}\le\frac{m}{m+1}.$$
	\end{proof}

\textbf{Remark:} As with the discrete exponential mechanism, this bound is tight and can be achieved by setting $p$ equal to the lower bound established in Theorem~\ref{th:uni}.

%%%%%%%%%%%%%%%%%%%%%%%%%%%%%%%%%%%%%%%%%%%%%%%%%%%%%%%%%%%%%%%%%%%%%%%%
\subsection{Optimal mechanism}

We now look at the second main result of this section. Using the same error definition as before, (\ref{eq:err}), we show how to construct the optimal {\ab}-differentially private mechanism which produces the minimum error. For the purpose of this subsection, we assume the following labelling of elements in the data set:
\begin{equation}
D = \{\dt_1, \dt_2, \dots, \dt_{m+1}\}.
\end{equation}

\begin{definition}[Solution Matrix]
The parent mechanism $\{X_d\}$ of a product sanitisation mechanism can be defined by a stochastic matrix $$P_{(\epsilon,\delta)}\in[0,1]^{(m+1)\times(m+1)}, $$ where
\begin{equation}
\Prob(X_{\dt_i}=\dt_j)=\left[P_{(\epsilon,\delta)}\right]_{ij}.
\end{equation}
We refer to $P_{(\epsilon,\delta)}$ as a product sanitisation \emph{solution matrix} if the mechanism it defines is {\ab}-differentially private.
\end{definition}

\textbf{Remark:} A parent mechanism which satisfies (\ref{eq:uni}) is represented by the following solution matrix:
\begin{equation}\label{eq:solnmtx}
\left[P_{(\epsilon,\delta)}\right]_{ij} = \begin{cases}
1-mp &\text{if } i = j,   \\
p &\text{if } i \ne j.   \end{cases}
\end{equation}

\begin{theorem}[Optimality]\label{th:opt}
Let $\{X_d\}$ be a parent mechanism which satisfies (\ref{eq:uni}), with $p=\frac{1-\delta}{e^\epsilon+m}$. The error, $\E=\max_{\md\in D^n}\mathbb{E}[h(X_\md,\md)]$, produced by its product sanitisation mechanism $\{X_\md\}$ is the minimum of all product sanitisation mechanisms.
\end{theorem}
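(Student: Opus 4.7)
The plan is to show that for any $(\epsilon,\delta)$-differentially private product sanitisation mechanism, the per-row error $\E/n$ is at least $mp$ with $p=(1-\delta)/(e^\epsilon+m)$; since the mechanism in the statement is $(\epsilon,\delta)$-differentially private by Theorem~\ref{th:uni} and attains this value by Theorem~\ref{th8}, optimality will follow.

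First I would reuse the calculation from the proof of Theorem~\ref{th8} to express the error of any product sanitisation as $\E = n\max_{d\in D}(1-\Prob(X_d=d)) = n(1-q)$, where $q:=\min_{i}P_{ii}$ and $P_{ij}:=\Prob(X_{\dt_i}=\dt_j)$ is the solution matrix of the parent mechanism. By Corollary~\ref{cr:prod}, $(\epsilon,\delta)$-differential privacy of the product sanitisation is equivalent to the same for its parent, so the problem reduces to maximising $q$ over all $(\epsilon,\delta)$-differentially private stochastic matrices $P\in[0,1]^{(m+1)\times(m+1)}$.

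The crucial step is to apply the differential privacy condition to each singleton $A=\{\dt_k\}$ with $d=\dt_k$ and $d'=\dt_j$ for $k\ne j$, yielding $P_{kk}\le e^\epsilon P_{jk}+\delta$, and hence
$$P_{jk}\ge \frac{P_{kk}-\delta}{e^\epsilon}\ge \frac{q-\delta}{e^\epsilon}\qquad\text{for all }k\ne j.$$
Summing along row $j$ (whose diagonal entry is at least $q$ and whose $m$ off-diagonal entries are each bounded below by $(q-\delta)/e^\epsilon$) against the row-stochasticity identity $\sum_k P_{jk}=1$ gives
$$1\ge q + \frac{m(q-\delta)}{e^\epsilon},$$
which rearranges to $q(e^\epsilon+m)\le e^\epsilon+m\delta$, so $\E/n = 1-q\ge m(1-\delta)/(e^\epsilon+m) = mp$. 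Combined with Theorem~\ref{th8}, which says the mechanism in the statement attains $\E/n = mp$ exactly, this proves the claim.

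The main obstacle I expect is identifying that every off-diagonal entry $P_{jk}$ can be bounded below by $(q-\delta)/e^\epsilon$, not merely the column corresponding to an index achieving the minimum. Pulling on only the minimising column yields the weaker estimate $q\le(e^\epsilon+\delta)/(e^\epsilon+1)$, which does not match the mechanism's error; the tight bound is obtained only by invoking the DP inequality once per column and then aggregating all $m$ lower bounds on row $j$ through the row-stochasticity constraint.
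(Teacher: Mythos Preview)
Your argument is correct and complete: the singleton DP inequalities $P_{kk}\le e^\epsilon P_{jk}+\delta$ combined with row-stochasticity give exactly the bound $q\le (e^\epsilon+m\delta)/(e^\epsilon+m)$, and the mechanism in the statement attains it by Theorems~\ref{th:uni} and~\ref{th8}.

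However, the paper argues differently. Rather than deriving an \emph{a priori} upper bound on $q$, it fixes the candidate optimal matrix $A$ (which satisfies $a_{jj}=e^\epsilon a_{ij}+\delta$ with equality for every $i\ne j$), takes any other solution matrix $B\ne A$, finds an entry $(i^*,j^*)$ with $b_{i^*j^*}<a_{i^*j^*}$, and shows this forces $b_{j^*j^*}<a_{j^*j^*}$ and hence strictly larger error. Thus the paper uses a single off-diagonal comparison together with the tightness of the DP constraint for $A$, and never invokes row-stochasticity of $B$. Your approach instead aggregates $m$ DP inequalities along a row against the row-sum constraint. The trade-off is that the paper's comparison argument yields strict optimality (uniqueness of the minimiser) essentially for free, while your bound gives only $\E(B)\ge\E(A)$; on the other hand, your route is arguably cleaner and makes explicit that the bound $m(1-\delta)/(e^\epsilon+m)$ on $\E/n$ is an intrinsic consequence of differential privacy plus stochasticity, independent of the specific form of $A$.
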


	\begin{proof}
	Let $A=P_{(\epsilon,\delta)}$, satisfying (\ref{eq:solnmtx}) with $p=\frac{1-\delta}{e^\epsilon+m}$, be the solution matrix of $\{X_d\}$, \emph{i.e.} $\Prob(X_{\dt_i}=\dt_j)=a_{ij}$. Note that this mechanism has the property that $a_{jj}=e^\epsilon a_{ij}+\delta$, since $a_{jj}=\frac{e^\epsilon+m\delta}{e^\epsilon+m}$ and $a_{ij}=\frac{1-\delta}{e^\epsilon+m}$, for all $i\ne j$.
	
	The error of its product sanitisation mechanism $\{X_\md\}$, using the same method as in the proof of Theorem~\ref{th8}, is
	\begin{align*}
	\max_{\md\in D^n}\mathbb{E}[\h(X_\md, \md)] &= n\left(\max_{d\in D}\left(1-\Prob(X_d=d)\right)\right)\\
	&=n\left(\max_i(1-a_{ii})\right)\\
	&=n\left(1-\min_i a_{ii}\right)\\
	&=n(1-a_{ii}),
	\end{align*}
	for all $i$, since $a_{ii}=a_{jj}=1-mp$ for all $i$ and $j$.
	
	Let $B$ be a solution matrix defining the parent mechanism $\{Y_d\}$, with a corresponding product sanitisation mechanism $\{Y_\md\}$, where $B\ne A$. Since $B$ is a solution matrix, it is stochastic ($\sum_j b_{ij}=1$ for all $i$) and the parent mechanism it defines is {\ab}-differentially private ($\Prob(Y_d\in A)\le e^\epsilon \Prob(Y_{d'}\in A)+\delta$, for all $d,d'\in D$ and $A\subseteq D$).
	
	Since $A\ne B$ and $A,B$ are stochastic, there exists at least one pair $(i^*, j^*)$, where $b_{i^*j^*}< a_{i^*j^*}$. There are two cases to consider:
	\begin{enumerate}
	\item $i^*=j^*$: The error of $\{Y_\md\}$ is then
	\begin{align*}
	\max_{\md\in D^n}\mathbb{E}[\h(Y_\md, \md)]&=n\left(1-\min_i b_{ii}\right)\\
	&\ge n\left(1-b_{j^*j^*}\right)\\
	&>n(1-a_{j^*j^*})\\
	&=\max_{\md\in D^n}\mathbb{E}[\h(X_\md, \md)].
	\end{align*}

	\item $i^*\ne j^*$: As noted previously, $a_{j^*j^*}=e^\epsilon a_{i^*j^*}+\delta$, and since $\{Y_d\}$ is {\ab}-differentially private, $\Prob(Y_{d_{j^*}}=d_{j^*})\le e^\epsilon\Prob(Y_{d_{i^*}}=d_{j^*})+\delta$, or alternatively, $b_{j^*j^*}\le e^\epsilon b_{i^*j^*}+\delta$. Therefore,
	\begin{align*}
	b_{j^*j^*}&\le e^\epsilon b_{i^*j^*}+\delta\\
	&<e^\epsilon a_{i^*j^*}+\delta\\
	&=a_{j^*j^*}.
	\end{align*}

	Hence, $b_{j^*j^*}<a_{j^*j^*}$, and case 1 applies.
	\end{enumerate}
	
	We therefore conclude that $$\max_{\md\in D^n}\mathbb{E}[\h(Y_\md, \md)]>\max_{\md\in D^n}\mathbb{E}[\h(X_\md, \md)],$$ and that $\{X_\md\}$ is the product sanitisation mechanism which produces the optimal error.
	\end{proof}

Using Theorem~\ref{th:opt}, we now have a simple method to construct the most accurate {\ab}-differentially private mechanism possible for discrete data. We have proven that no other product sanitisation mechanism is more accurate than it. It follows from Theorem~\ref{th:equiv} that we now know the optimal discrete exponential mechanism that satisfies (\ref{eq:demham}).

%%%%%%%%%%%%%%%%%%%%%%%%%%%%%%%%%%%%%%%%%%%%%%%%%%%%%%%%%%%%%%%%%%%%%%%%
%%%%%%%%%%%%%%%%%%%%%%%%%%%%%%%%%%%%%%%%%%%%%%%%%%%%%%%%%%%%%%%%%%%%%%%%
\section{Conclusion}\label{sc:conc}

We study mechanisms for {\ab}-differential privacy on finite datasets that are an adaptation for discrete data of the exponential mechanism introduced by McSherry and Talwar.   By deriving \emph{sufficient sets} for differential privacy we obtain necessary and sufficient conditions for differential privacy, a tight lower bound on the maximal expected error of a discrete mechanism and a characterisation of the optimal mechanism which minimises the maximal expected error within the class of mechanisms considered.

%We took McSherry and Talwar's exponential mechanism, formulated in our notation, and adapted it for use with {\ab}-differential privacy specifically on finite, discrete data. We then examined the sufficient sets of this general mechanism, and for a particular example of it using hamming distance, we made use of this sufficient set analysis to establish necessary and sufficient conditions for differential privacy, producing two-fold improvements in the level of privacy we can guarantee compared to previous literature. In the second main section of this paper, we presented an alternative method to construct this particular form of the discrete exponential mechanism, with proof of equivalence, and added insight to the workings of the mechanism by using new techniques to prove previous theorems. Finally, we completed the paper with a proof of the optimal mechanism.

%%%%%%%%%%%%%%%%%%%%%%%%%%%%%%%%%%%%%%%%%%%%%%%%%%%%%%%%%%%%%%%%%%%%%%%%
%%%%%%%%%%%%%%%%%%%%%%%%%%%%%%%%%%%%%%%%%%%%%%%%%%%%%%%%%%%%%%%%%%%%%%%%

\section*{Acknowledgments}
The first named author was supported by the Science Foundation Ireland grant SFI/11/PI/1177. 

\section*{References}

\end{document}